\documentclass[11pt]{article}

\usepackage[left=1in,right=1in,top=1in,bottom=1in]{geometry}

\usepackage{charter}
\usepackage[charter,cal=cmcal]{mathdesign}
\usepackage{microtype}

\usepackage{color}
\definecolor{darkblue}{rgb}{0,0,0.5}
\definecolor{darkgreen}{rgb}{0,0.5,0}
\usepackage[colorlinks=true,linkcolor=darkblue,urlcolor=darkblue,citecolor=darkgreen]{hyperref}

\usepackage{graphicx}

\usepackage{amsmath}
\newtheorem{theorem}{Theorem}[section]
\newtheorem{definition}[theorem]{Definition}
\newenvironment{proof}{\noindent \textsc{Proof.} }{\hfill $\square$\vspace{10pt}}

%
%

\usepackage[matrix,frame,arrow]{xy}
\usepackage{amsmath}
\newcommand{\bra}[1]{\left\langle{#1}\right\vert}
\newcommand{\ket}[1]{\left\vert{#1}\right\rangle}
\newcommand{\qw}[1][-1]{\ar @{-} [0,#1]}
\newcommand{\qwx}[1][-1]{\ar @{-} [#1,0]}
\newcommand{\cw}[1][-1]{\ar @{=} [0,#1]}

\newcommand{\gate}[1]{*{\xy *+<.6em>{#1};p\save+LU;+RU **\dir{-}\restore\save+RU;+RD **\dir{-}\restore\save+RD;+LD **\dir{-}\restore\POS+LD;+LU **\dir{-}\endxy} \qw}
\newcommand{\meter}{\gate{\xy *!<0em,1.1em>h\cir<1.1em>{ur_dr},!U-<0em,.4em>;p+<.5em,.9em> **h\dir{-} \POS <-.6em,.4em> *{},<.6em,-.4em> *{} \endxy}}





\newcommand{\control}{*!<0em,.025em>-=-{\bullet}}

\newcommand{\ctrl}[1]{\control \qwx[#1] \qw}



\newcommand{\multigate}[2]{*+<1em,.9em>{\hphantom{#2}} \qw \POS[0,0].[#1,0];p !C *{#2},p \save+LU;+RU **\dir{-}\restore\save+RU;+RD **\dir{-}\restore\save+RD;+LD **\dir{-}\restore\save+LD;+LU **\dir{-}\restore}
\newcommand{\ghost}[1]{*+<1em,.9em>{\hphantom{#1}} \qw}

\newcommand{\gategroup}[6]{\POS"#1,#2"."#3,#2"."#1,#4"."#3,#4"!C*+<#5>\frm{#6}}

\newcommand{\lstick}[1]{*!R!<.5em,0em>=<0em>{#1}}


\newcommand{\Qcircuit}[1][0em]{\xymatrix @*[o] @*=<#1>}



\newcommand{\tensor}{\otimes}
\newcommand{\ketbra}[2]{\ket{#1} \bra{#2}}
\newcommand{\braket}[2]{\langle #1 | #2 \rangle}

\newcommand{\poly}{\ensuremath{\mathrm{poly}}}
\usepackage{multirow}
\usepackage{algorithm}

\title{\bf Quantum Coins}

\author{Michele Mosca$^{1}$ and Douglas Stebila$^{2}$ \\%
\ \\
\it \small $^{1}$ Institute for Quantum Computing and Department of Combinatorics, University of Waterloo \\
\it \small Perimeter Institute for Theoretical Physics, Waterloo, Ontario, Canada \\
\it \small $^{2}$ Information Security Institute, Queensland University of Technology, Brisbane, Queensland, Australia \\
\small Email: \href{mailto:mmosca@iqc.ca}{\tt mmosca@iqc.ca}, \href{mailto:douglas@stebila.ca}{\tt douglas@stebila.ca}
}

\date{November 7, 2009}

\begin{document}

\maketitle

\begin{abstract}
One of the earliest cryptographic applications of quantum information was to create quantum digital cash that could not be counterfeited.  In this paper, we describe a new type of quantum money: \textbf{quantum coins}, where all coins of the same denomination are represented by identical quantum states.  We state desirable security properties such as anonymity and unforgeability and propose two candidate quantum coin schemes: one using black box operations, and another using blind quantum computation.

\ \\
{\bf Keywords:} Quantum money, digital cash, quantum cryptography
\end{abstract}

\section{Introduction}\label{sec:qu:qm:intro}

The uncertainty principle and no-cloning theorem of quantum mechanics made quantum money one of the original interests of quantum information theory.  The ability to create digital money which cannot be counterfeited because of the laws of physics is a compelling idea.  Classical digital cash has been researched extensively, with ongoing improvements to its security tradeoffs, but remains fundamentally subject to the constraint that classical bits can be easily copied.  With quantum money, we hope to use the inability to perfectly clone quantum states to prevent counterfeiting.  Besides being non-counterfeitable, an effective digital cash scheme should also be efficiently verifiable, anonymous, transferable, and robust.  

In this paper, we describe a new form of quantum money called \emph{quantum coins}, where all coins of the same denomination are represented by identical quantum states.  We state formally what it means for them to be unforgeable and describe how to implement quantum coin schemes using black box operations and using blind quantum computing.  We also describe \emph{quantum bills} which capture a wide range of notions of quantum money.

\paragraph{Contributions.}
In this paper, we present a new type of quantum money, which we call \emph{quantum coins}: coins are transferable, locally verifiable, and unforgeable, and have some anonymity properties.  Each coin generated by the bank should be a copy of the same quantum state, and hence coins should be indistinguishable from one another.  Additionally, a circuit is provided to allow the coins to be verified locally and then transferred for later use.  

We describe how to achieve quantum coins with black box quantum circuits and with blind quantum computation.  The unforgeability of coins in our scheme comes from complexity theoretic assumptions on the adversary's running time.

Our work contrasts with previous quantum money schemes, which we call \emph{quantum bills}: in a quantum bill scheme, the bank generates tokens that are classical/quantum pairs, which in general are distinct.  The classical string may serve as a serial number or as some input value to be used in the verification procedure.  

\paragraph{Future directions.}
Our quantum coin construction of Section~\ref{sec:qu:qm:bbcoins} requires the use of a black-box oracle in the verification circuit, but it is not yet known how these can be implemented.  An open question is to find a way to obfuscate the verification circuit so that it is effectively a black box, and in general to find a model for obfuscation of quantum circuits, possibly using computational assumptions.  We describe how blind quantum computation could be used in the context of quantum coin verification and note the limitations, in particular the online quantum communication required.  Reducing the communication and computational requirements of blind quantum computing is a problem that merits further study.  

Although our coins are inherently anonymous if the bank issues coins correctly, we do not yet have a mechanism to allow users of the system to verify that the coins are indeed issued correctly, so this remains an open question.

In Section~\ref{sec:qu:qm:types:bills}, we briefly discuss a model for quantum bills.  An open question related to quantum bills is to find an offline-verifiable quantum bill scheme; this may require using computational hardness assumptions.

\paragraph{Outline.} The remainder of the paper is organized as follows.  In Section~\ref{sec:qu:qm:goals}, we describe the goals for a quantum money scheme and analyze existing quantum money schemes, as well as our own, in relation to these goals.  Section~\ref{sec:qu:qm:types} introduces the two main types of quantum money, quantum coins and quantum bills, and describes their precise security properties.  In Section~\ref{sec:qu:qm:bbcoins}, we describe how to implement quantum coins in the black box model and give bounds on unforgeability.  In Section~\ref{sec:qu:qm:blindcoins}, we discuss implementing quantum coins using blind quantum computation.

\subsection{Related work}

\paragraph{Digital cash.}
Digital cash has been well-explored in classical cryptographic contexts, with the first schemes being proposed by Chaum \cite{Cha85,Cha88} and Chaum, Fiat, and Naor \cite{CFN88}.  For classical digital cash schemes, one of the main problems to solve is the \emph{multiple-spending problem}: since classical digital cash can easily be duplicated, there must be a way to prevent the same tokens from being redeemed more than once.  An online scheme, in which each token is verified with the bank at the time it is meant to be spent, solves this problem immediately, but online verification requires an online communications channel between merchant and bank.  The other general solution for preventing multiple spending is to embed some identity information in the money tokens such that, if the token is spent only once, the transaction remains anonymous, but if the token is spent multiple times, then the bank can combine these multiple transactions to recover the identity of the multiple spender.  Moreover, classical digital cash is not transferable unless we allow the size of the token to grow linearly in the number of transfers \cite{CP92}.  

\paragraph{Quantum money.}
Quantum money was one of the earliest applications of quantum information theory, and was introduced in the early papers of Wiesner \cite{Wie83} and Bennett, Brassard, Breidbard, and Wiesner \cite{BBBW82}.  In both schemes, a bank constructs distinct quantum tokens and corresponding classical serial numbers.  The tokens are the encoding of a random string in randomly chosen basis states of two non-orthogonal bases; the no-cloning theorem prevents perfect cloning of individual tokens.  However, the tokens can only be verified by the bank: verification requires knowledge of the bases chosen for each token and the classical string that should be obtained upon measurement in the appropriate bases.  This means that an online quantum channel is required between merchants and the bank.  The tokens are non-transferable and are not anonymous.

Tokunaga, Okamoto, and Imoto \cite{TOI03} give a scheme for non-transferable anonymous quantum cash with online verification.  In their scheme, a user obtains a distinct token from the bank; tokens are generated using private parameters and random values stored by the bank.  The user then alters the token with an appropriate randomly chosen unitary transformation to obtain anonymity.  At payment time, the user presents the token to the merchant who transmits it (over a quantum channel) to the bank for verification.  The scheme is secure against an attacker who can examine a single token, but has not been proven secure against an attacker who can obtain and examine all the quantum tokens.

Our work on quantum coins makes use of work by Aaronson \cite{Aar05} that introduced a complexity-theoretic no-cloning theorem that allows us to argue for the unforgeability of quantum coins.  Our work was first presented in \cite{MS06}, \cite{MS07}, and \cite{Ste09}.  Subsequently Aarsonson expanded his work based on discussions with us to also include a presentation of quantum money \cite{Aar09} similar to ours; we have noted in footnotes throughout this paper where that he presents similar concepts.

\section{Security goals}\label{sec:qu:qm:goals}

We now describe, informally, the properties that a good money scheme should have.

\begin{enumerate}
\item[G1.] \emph{Anonymous}: it should be difficult for any party to trace the use of a token to determine who spent it or where they spent it.
\item[G2.] \emph{Unforgeable}: given zero or more tokens and the verification circuit, it should be difficult for a forger to produce another token that passes the verification procedure with non-negligible probability.
\item[G3.] \emph{Efficiently locally verifiable}: there should be an efficient algorithm that can determine with high accuracy whether a token is valid or not, without communicating with the bank.
\item[G4.] \emph{Transferable}: a valid token should be unchanged by the verification procedure, and thus can be transferred and reused in a subsequent verification procedure.
\end{enumerate}

We will formally define unforgeability for quantum coin schemes in Section \ref{sec:qu:qm:types:coins:forgery}.

Figure~\ref{fig:goals:comparison} shows which of the above goals are satisfied by various existing money schemes.  The ``type'' column indicates whether the tokens for a given denomination are all identical (``coin'') or different (``bill'').
For classical digital cash schemes, we note that while unforgeability is impossible, it is possible to detect double spending of a token and trace it back to the offending party; such schemes, however, offer anonymity and offline double-spending detection only with computational assumptions.  Our quantum coin schemes offer ``partial'' anonymity as we describe in Section~\ref{sec:qu:qm:types:coins:anon}.  Additionally, the size of transferable digital cash must grow linearly in the number of transfers  \cite{CP92}.

\begin{figure}[ht]
\begin{center}
{
\begin{tabular}{l|c|c|c|c|c}
& & & & \multirow{3}{1in}{\centering \bf Efficiently\\locally\\verifiable} & \\
{\bf Scheme} & {\bf Type} & {\bf Anonymous} & {\bf Unforgeable} & & {\bf Transferable} \\
& & & & & \\ \hline \hline
Physical coins & coin & yes & physically & yes & yes \\ \hline
Physical bills & bill & no & physically & yes & yes \\ \hline
Classical digital & bill & yes & double-spending & yes & grows in \\
cash & & & detection & & size \\ \hline
\cite{Wie83} & q. bill & no & yes & no & no \\ \hline
\cite{BBBW82} & q. bill & no & yes & no & no \\ \hline
\cite{TOI03} & q. bill & yes & yes & no & no \\ \hline
This work: & q. coin & partially & yes & yes & yes \\
black box & & & & & \\ \hline
This work: & q. coin & partially & yes & no & yes \\
blind computation & & & & & \\ \hline
\end{tabular}
}
\end{center}
\caption{Summary of money schemes and their properties}\label{fig:goals:comparison}
\end{figure}

\section{Types of quantum money}\label{sec:qu:qm:types}

\subsection{Quantum coins}\label{sec:qu:qm:types:coins}

In one type of quantum money, \emph{quantum coins}, a bank issues many tokens for a particular denomination, and all these tokens are (supposed to be) copies of the same quantum state.  The state for a 5-cent coin, for example, might be the pure state $\ket{\psi_{5}}$ and the bank produces many copies $\ket{\psi_{5}}^{\otimes 1000000}$, issuing one copy to each person who withdraws 5 cents from the bank.  We use the term \emph{quantum coin} because physical coins in the real world have the same property: there should be no discernible difference between different coins of the same denomination.  The specification of a quantum coin scheme consists of the specification of the money state and the verification circuit.

\begin{definition}\label{defn:qu:qm:types:coins}
A \emph{quantum coin scheme} is a pair $(V, \ket{\psi})$, where $\ket{\psi}$ is an $n$-qubit pure state in a $2^{n}$-dimensional Hilbert space $\mathcal{H}^{2^{n}}$, and $V$ is a quantum circuit with a quantum $n$-qubit input register (denoted $\rho$), plus optional ancilla quantum registers, a classical output bit, and a quantum output register of $n$ qubits.
\end{definition}

The basic scenario of how a quantum coin scheme would operate is as follows.  A bank generates a large number of quantum coins and stores them.  A user withdraws coins from the bank via a private quantum channel and stores the coins.  When the user wishes to spend the coins, it transfers the coins to the merchant using a quantum channel.  The merchant uses a quantum circuit to verify the coins; this procedure may or may not involve classical or quantum communication with the bank.  Finally, the merchant stores the coins until redeeming them with the bank or issuing them as change to subsequent users.

\subsubsection{Verification}\label{sec:qu:qm:types:coins:verif}

In the most general setting, the verification circuit $V$ operates on three registers: a 1-qubit data readout register, an $n$-qubit input register, and an arbitrary $m$-qubit ancilla.  After applying $V$, the first register is measured, and the output is the decision on whether to accept the token as valid or not.  If the input is a valid quantum coin $\ket{\psi}$, then, after the application of $V$ and the measurement, the classical output should be 0 and the partial trace over the first and third registers should leave the second register in the same state $\ket{\psi}$.  The circuit diagram is given in Figure~\ref{fig:types:coins:verif-circ-general}.

\begin{figure}[ht]
\[
\Qcircuit @C=1em @R=.7em {
\lstick{\ket{0}} & \qw & \multigate{2}{V} & \meter & \cw \\
\lstick{\rho} & {/} \qw & \ghost{V} & {/} \qw & \qw \\
\lstick{\ket{0}^{\otimes m}} & {/} \qw & \ghost{V} & &
}
\]
\caption{Generic verification circuit for a quantum coin scheme $(V, \ket{\psi})$.}
\label{fig:types:coins:verif-circ-general}
\end{figure}
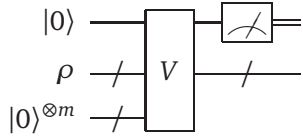

We cannot simply provide this circuit in an unprotected form to the public: it may be possible to decompose the circuit into component gates and find a way to forge money.  In Section~\ref{sec:qu:qm:bbcoins} we describe two techniques for implementing this circuit in a safe way: (1) black box verification, in which we assume the circuit is a black box and security rests on complexity-theoretic assumptions, and (2) blind quantum computation, which allows one party to implement an operation without gaining any information about the operation being performed, and security is information-theoretic.  It could be possible to construct a scheme based on computational assumptions.

\subsubsection{Unforgeability}\label{sec:qu:qm:types:coins:forgery}

We assume that a forger has the verification circuit $V$ and many (or all) tokens issued, say $k$ of them.  The goal of a forger is to produce a state that passes more than $k$ verification tests with good probability.  Since the verification circuit projects the state into the subspace spanned by $\ket{\psi}$, this is equivalent to creating a state that has good overlap with the state $\ket{\psi}^{\otimes k+1}$.

\begin{definition}\label{defn:qu:qm:types:coins:unforgeable}
A quantum coin scheme $(V, \ket{\psi})$, where $\ket{\psi}$ is an $n$-qubit state, is \emph{unforgeable} if, given the verification circuit $V$ and $k$ copies of the state $\ket{\psi}$, for any $k \ge 0$, $k \in \poly(n)$, it is not possible for a quantum adversary running in time $\poly(n)$ to produce a state $\rho$ such that $\bra{\psi}^{\otimes k+1} \rho \ket{\psi}^{\otimes k+1}$ is non-negligible (in $n$).\footnote{In the language of Aaronson \cite{Aar09}, this is a single key public key quantum money scheme with completeness error 0 and soundness error negligible in $n$.}
\end{definition}

In order to prevent a counterfeiter from performing quantum state tomography \cite{AJK04} and precisely determining the state $\ket{\psi}$, the bank should avoid issuing more than a polynomial number (in $n$) of coins.

Information theoretically, no offline quantum coin scheme can be perfectly unforgeable (that is, with $\bra{\psi}^{\otimes k+1} \rho \ket{\psi}^{\otimes k+1}=0$ and no running time restriction in Definition~\ref{defn:qu:qm:types:coins:unforgeable}).  If a forger has a verification circuit and unbounded quantum computational resources, the forger can repeatedly generate test states until one such state passes; after verification, this state is projected into a valid money state and can subsequently be used as a money token.  Thus, we must introduce computational assumptions on a forger and attempt to lower bound the amount of work required to forge.

Without any further specification of the quantum coin scheme and the verification circuit, we cannot say anything more about the unforgeability of such schemes.  In Section~\ref{sec:qu:qm:bbcoins:forgery}, we show that a black box quantum coin scheme is unforgeable.

\subsubsection{Anonymity}\label{sec:qu:qm:types:coins:anon}

In our ideal formulation, all quantum coins (for a particular denomination) are minted as the same quantum state $\ket{\psi}$.  However, the bank could create quantum coins from different quantum states, all of which can be verified by a particular verification circuit.  Although we have no procedure for users to test the anonymity of the system, it would be possible for a regulator to regularly review the procedures of the bank and ensure that it is issuing identical tokens as the coins.  If indeed all the coins issued are identical, then it is impossible for the use of a coin to be tracked.  If quantum circuits can be obfuscated, then the verification circuit could be provided in an obfuscated form as a fixed public classical string which merchants then implement; since the circuit is fixed for all merchants, this would give anonymity to merchants as well.  If an interactive protocol is required for verification (as in our use of blind quantum computing in Section~\ref{sec:qu:qm:blindcoins}), then anonymous classical \cite{BT07} and quantum \cite{BBFGT07} communication can be used to improve the anonymity of merchants.

\subsection{Quantum bills}\label{sec:qu:qm:types:bills}

Whereas all quantum coins of the same denomination are identical states, with \emph{quantum bills} we allow tokens of the same denomination to be different quantum states and additionally allow some classical information associated with each quantum state.  So a bank might issue a set of states $\{ (s_{i}, \ket{\psi_{i}}) : i \in \Gamma \}$ as the valid \$20 bills.  This corresponds to physical bills which have a distinct serial number on each bill.  

An example of an approach one might take to making quantum bills would be the following. Let $a$ be an element of order $m$ of some group $G$ and let $r$ be a function that encrypts elements of $G$. Suppose there were a way to publish a circuit $C$ that implements, for any group element $b$ and integer $y \in \{0,1,\ldots, m-1\}$, the mapping $\ket{y}\ket{r(b)} \rightarrow \ket{y} \ket{r(b a^{y})}$ but from which one cannot (among other things) determine $x$ given $\ket{r(a^x)}$. (Note that the standard quantum discrete logarithm algorithm for computing $x$ would require a means for computing $r(a^{zx+y})$ for arbitrary integers $z$ and $y$.) Then a possible way to generate quantum money is for a bank to perform eigenvalue estimation (starting from a state $\ket{r(b)}$) in order to generate a random eigenstate of the operation induced by $C$, of the form 
\[ \ket{\psi_{k}} = \sum_{x=0}^{m-1} e^{-2 \pi i kx/m} \ket{r(b a^x)} \enspace , \]
together with the eigenvalue parameter $k$. The bank would publish an authentic list of valid parameters $k$. The bill would consist of the state $\ket{\psi_k}$ and the classical value $k$, which any verifier could check by performing eigenvalue estimation on the bill and confirming the eigenvalue parameter is $k$ (and that $k$ is on the authentic list of valid serial numbers). There are many variations of this approach that one might try, and many open questions. We will focus on quantum coins in this paper.

\begin{definition}\label{defn:qu:qm:types:bills}
A \emph{quantum bill scheme} is a pair $(V, \{ (s_{i}, \ket{\psi_{i}}) : i \in \Gamma \})$, where $\Gamma$ is a finite set, and for each $i \in \Gamma$, $s_{i}$ is a label in a set $\mathcal{S}$, $\ket{\psi_{i}}$ is an $n$-qubit pure state in a $2^{n}$-dimensional Hilbert space $\mathcal{H}^{2^{n}}$.  Moreover, $V$ is a quantum circuit with a quantum input register (denoted $\ket{s}$), a quantum $n$-qubit input register (denoted $\rho$), plus optional ancilla quantum registers, a classical output bit, and a quantum output register of $n$ qubits.\footnote{In the language of Aaronson \cite{Aar09}, this is a public key quantum money scheme.}
\end{definition}

\subsubsection{Verification}\label{sec:qu:qm:types:bills:verif}

A generic verification circuit for a quantum bill scheme is given in Figure~\ref{fig:types:bills:verif-circ-general}.

\begin{figure}[ht]
\[
\Qcircuit @C=1em @R=.7em {
\lstick{\ket{0}} & \qw & \multigate{3}{V} & \meter & \cw \\
\lstick{\ket{s}} & {/} \qw & \ghost{V} & & \\
\lstick{\rho} & {/} \qw & \ghost{V} & {/} \qw & \qw \\
\lstick{\ket{0}^{\otimes m}} & {/} \qw & \ghost{V} & &
}
\]
\caption{Generic verification circuit for a quantum bill scheme $(V, \{ (s_{i}, \ket{\psi_{i}}) : i \in \Gamma \}$).}
\label{fig:types:bills:verif-circ-general}
\end{figure}
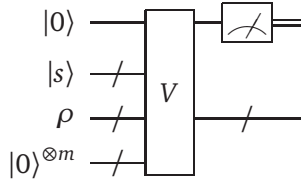

The use of the classical label $s_{i}$ may vary according to the scheme.  For example, in the schemes of Wiesner \cite{Wie83} and Bennett {\it et al.} \cite{BBBW82}, $s_{i}$ is a serial number that allows the issuer to retrieve the verification details, while in the scheme of Tokunga {\it et al.} \cite{TOI03}, $s_{i}$ is effectively unused; in their scheme it is used to represent the denomination of the bill (e.g., \$5), but in our formulation the denomination is fixed for a particular scheme so the label is effectively the empty string for all $i \in \Gamma$.  Schemes where $s_{i}$ is non-trivial and unchanged by verification inherently limit the anonymity of the scheme, just as the serial number on physical bills places some limits on anonymity.

While all previous quantum money schemes discussed in Section~\ref{sec:qu:qm:intro} are classified as quantum bill schemes based on the above definition, none of them satisfy all of the security properties described in Section~\ref{sec:qu:qm:goals}.  In particular, no previous quantum money scheme is offline verifiable: all previous schemes require that the issuer verify a token via quantum communication, a requirement which we aim to remove for quantum coins.  In the rest of this paper, we are only concerned with quantum coin schemes, not quantum bill schemes.

\section{Black box quantum coins}\label{sec:qu:qm:bbcoins}

Our first implementation for quantum coins works in the black box circuit model. We assume the verification circuit provided to the public is a black box: ``anything one can compute from it one could also compute from the input-output behavior of the program'' \cite[p. 2]{BGIRSVY01full}.  With this assumption, we present a scheme in which coins are unforgeable.  The scheme allows coins to be transferred an arbitrary number of times.  The use of a black box circuit means that coins can be verified locally without any communication, classical or quantum, with the bank.

We note that it is not known at present whether a quantum circuit can be implemented as a true black box.  There are pessimistic results about the ability to obfuscate classical circuits \cite{BGIRSVY01}, although loopholes do exist: for example, point functions can be obfuscated \cite{Wee05}.  However, no results are known about quantum circuits.  Another classical technique for black box computation is physically tamper-proof hardware, but again the parallel in quantum computation is not clear.

In our black box construction, a coin is a randomly chosen secret state, and the verification circuit recognizes precisely that state using an oracle like the iterate in amplitude amplification \cite{BBHT98}.

Let $\ket{\psi}$ a pure state chosen randomly (according to the Haar measure) from among the pure states in $\mathcal{H}^{2^{n}}$.  The verification oracle is $U_{\psi}=I-2\ketbra{\psi}{\psi}$.  Since this is a black-box oracle scheme, the unforgeability proof of Section~\ref{sec:qu:qm:bbcoins:forgery} applies and the scheme is unforgeable in the black-box oracle model.  

In practice, however, choosing a pure state $\ket{\psi}$ randomly according to the Haar measure with the additional constraints that we must be able to compute $I-2\ket{\psi}\bra{\psi}$ and that we must be able to produce many copies of $\ket{\psi}$ is problematic and it is not known how to do so in polynomial time.  Recent work has focused on developing \emph{approximate quantum $t$-designs} \cite{AE07} where, roughly speaking, $t$ copies of a state can be efficiently constructed such that tensor product state is sufficiently close to $t$ copies of a state selected uniformly at random according to the Haar measure.  Aaronson \cite[Theorem~8]{Aar09} gives a technique for constructing $t \in \poly(n)$ copies of a pseudorandom state that are nearly indistinguishable (that is, negligibly different) from $t$ copies of a truly random state by any measurement, even allowing the measurement procedure to make $\poly(n)$ calls to an oracle $U_{\psi}$ recognizing the state.  Aaronson's technique allows us to use pseudorandom states instead of truly random states with a negligible loss in security.

We note that, for quantum coins, it is not sufficient to choose a random binary string encoded randomly in a pair of non-orthogonal bases, such as the so-called ``BB84'' bases.  An adversary with a small number of quantum coins, say $O(\log n)$, can measure each qubit of the $O(\log n)$ tokens in both bases, and will with good probability find the correct basis choices and thus the random binary string, allowing her to then create arbitrarily many forged coins.

\subsection{Verification}\label{sec:qu:qm:bbcoins:verif}

Let $U_{\psi}$ be an oracle that recognizes the state $\ket{\psi}$ by flipping the sign of the phase of the state $\ket{\psi}$.  That is, $U_{\psi} \ket{\psi} = -\ket{\psi}$ and $U_{\psi} \ket{\phi} = \ket{\phi}$ for all $\ket{\phi}$ orthogonal to $\ket{\psi}$; in other words, $U_{\psi}=I-2\ket{\psi}\bra{\psi}$.

We can construct a verification circuit $V$ from the oracle $U_{\psi}$ as follows.  On the data readout register, input the state $\ket{0}$, then perform a Hadamard transformation on the ancilla.  Use the ancilla as the control bit of a controlled-$U_{\psi}$ applied to the input state $\rho$.  Then perform a Hadamard transformation again on the ancilla and measure it in the computational basis.  The circuit diagram is given in Figure~\ref{fig:bbcoins:verif-circ-example}.

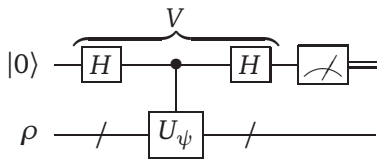
\begin{figure}[ht]
\[
\Qcircuit @C=1em @R=1em {
& & \mbox{$V$} & & & \\
\lstick{\ket{0}} & \gate{H} & \ctrl{1} & \gate{H} & \meter & \cw \\
\lstick{\rho} & {/} \qw & \gate{U_{\psi}} & {/} \qw & \qw & \qw 
\gategroup{2}{2}{3}{4}{0.5em}{^\}}
}
\]
\caption{Verification circuit for quantum coins $\ket{\psi}$ recognized using the oracle $U_{\psi}$.}
\label{fig:bbcoins:verif-circ-example}
\end{figure}

When a measurement in the computational basis is performed on the ancilla register, the result will be $\ket{1}$ when the input state $\rho$ is $\ket{\psi}$ and $\ket{0}$ when the input state is $\ket{\phi}$ for $\braket{\phi}{\psi}=0$.  Moreover, the state on the second register remains unchanged when its input is $\ket{\psi}$. 

The fact that a valid token is unchanged by the verification process  allows transferability of quantum coins.  When a customer spends a quantum coin at a store, the merchant, after verifying and accepting the coin, can retain the coin until the merchant needs to make change.  At that time, the merchant can give the coin to another user who, after optionally verifying the coin, can use that coin in another transaction.  (In fact, the verification process not only enables transferability but also enhances the robustness of the quantum coins.  Although over time a quantum state may decohere, at verification time the token may still be sufficiently close to the expected state $\ket{\psi}$ to pass the verification process with high probability.  If it does pass, then the measurement process will project the coin back into the original state $\ket{\psi}$.) 

\paragraph{Security.} 
The verification procedure described in the previous section yields a correct quantum money scheme: valid money tokens are  recognized.  We now discuss the security of such a scheme.  For unforgeability, we want that invalid tokens are recognized as being invalid and that it is difficult to forge new money.

\subsection{Black-box unforgeability}\label{sec:qu:qm:bbcoins:forgery}

To analyze the forgeability of the quantum coin scheme given in Figure~\ref{fig:bbcoins:verif-circ-example}, we suppose that the circuit for the unitary $U_{\psi}$ is a black box, meaning that no information can be obtained from observing its inner workings; equivalently, we assume that $U_{\psi}$ is given as an oracle.  Having made this assumption, we proceed to obtain a lower bound on the number of queries to the oracle that must be made in order to produce a state that has a particular overlap $p$ with $\ket{\psi}^{\otimes k+1}$, when the adversary is only given $k$ coins.  We show this result in the next section.

\begin{definition}\label{defn:qu:qm:bbcoins:unforgeable}
A quantum coin scheme $(V, \ket{\psi})$, where $\ket{\psi}$ is an $n$-qubit state, is \emph{black-box unforgeable} if, given an oracle $U_{\psi}$ recognizing the state $\ket{\psi}$ and $k$ copies of the state $\ket{\psi}$, for any $k \ge 0$, $k \in \poly(n)$, it is not possible for a quantum adversary using $\poly(n)$ queries to $U_{\psi}$ to produce a state $\rho$ such that $\bra{\psi}^{\otimes k+1} \rho \ket{\psi}^{\otimes k+1}$ is non-negligible.\footnote{In the language of Aaronson \cite{Aar09}, this is a single key private key quantum money scheme with completeness error 0 and soundness error negligible in $n$.}
\end{definition}

We note that our definition of unforgeability has the adversary producing a $(k+1)$-register state, each register of which should overlap well with $\ket{\psi}$.  An alternative formulation could be that the adversary needs to produce a multi-register state such that some $k+1$ of its registers, but not necessarily all of its registers, overlap well with $\ket{\psi}$.  These definitions are equivalent.  The adversary has access to a verification oracle and, for each of the many registers it constructs, could simply apply the verification oracle to each register and then trace out any registers that do not pass verification.  This requires additional calls to the verification oracle, but still only $\poly(n)$ calls to the oracle (since a polynomial-time adversary can only construct $\poly(n)$ registers), and hence remains within the constraints of the security argument above.

We note as well that it is not necessary to extend this definition to $k+\ell$ copies of $\ket{\psi}$: any adversary who can construct $k+\ell$ copies of $\ket{\psi}$ with non-negligible probability can in particular construct $k+1$ copies of $\ket{\psi}$ with non-negligible probability.  In other words, there are no ``long shots'' that pay off in expected value: the definition precludes being able to generate a very large number of coins with a very small probability but with non-negligible expected number of coins.

We now aim to show that a generic quantum coin scheme implemented with black-box oracles as in Figure~\ref{fig:bbcoins:verif-circ-example} is black-box unforgeable.  However, we cannot use the basic no-cloning theorem \cite{WZ82,Die82} or the result on approximate cloning \cite{BM07} because not only does a forger have copies of the state $\ket{\psi}$, the forger also has an oracle $U_{\psi}$ that will indicate whether the attempted cloning was successful.  Similarly, we cannot directly apply the $\Omega(\sqrt{N})$ lower bound on quantum search \cite{BBBV97} because the forger has not only an oracle $U_{\psi}$ recognizing the desired state but also some copies of the state itself.  Rather, we need a hybrid of these two results.

Aaronson \cite{Aar05} gives the following complexity-theoretic version of the no-cloning theorem that combines the lower bound for quantum search  with the no-cloning theorem.

\begin{theorem}[Theorem~5, \cite{Aar05}]\label{thm:qu:qm:bbcoins:forgery:aaronson}
Let $\ket{\psi}$ be an $n$-qubit pure state.  Suppose we are given the initial state $\ket{\psi}^{\tensor k}$ for some $k \ge 1$ as well as an oracle $U_{\psi}$ such that $U_{\psi} \ket{\psi} = - \ket{\psi}$ and $U_{\psi} \ket{\phi} = \ket{\phi}$ whenever $\braket{\phi}{\psi} = 0$.  Then to prepare a state $\rho$ such that 
\begin{equation}
\bra{\psi}^{k+1} \rho \ket{\psi}^{k+1} \ge p
\end{equation}
we need
\begin{equation}
\Omega \left( \frac{\sqrt{2^{n}p}}{k \log k} - k \right)
\end{equation}
queries to $U_{\psi}$.
\end{theorem}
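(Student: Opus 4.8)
The plan is to combine a BBBV-style hybrid (adversary) argument with a no-cloning bound obtained from symmetric-subspace dimension counting, and to control the relevant query weights by averaging over Haar-random $\ket{\psi}$. Write $d=2^{n}$ and model the forger as a sequence of fixed ($\psi$-independent) unitaries $W_{0},W_{1},\dots,W_{T}$ interleaved with $T$ queries to $U_{\psi}=I-2\ketbra{\psi}{\psi}$, all acting on the input $\ket{\psi}^{\tensor k}$ together with an ancilla initialized to $\ket{0}$. Because the whole problem is invariant under conjugating $\ket{\psi}$, the oracle, and the input copies by a common unitary, the optimal query complexity is the same for every $\ket{\psi}$; hence it suffices to lower bound $T$ for the Haar-averaged figure of merit $\mathbf{E}_{\psi}\bra{\psi}^{\tensor k+1}\rho\ket{\psi}^{\tensor k+1}$, and a bound there transfers to every fixed state.

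First I would compare the real execution to the \emph{oracle-free} execution in which every $U_{\psi}$ is replaced by $I$. Since $((U_{\psi}-I)\tensor I)\ket{w}=-2(\ketbra{\psi}{\psi}\tensor I)\ket{w}$ has norm $2\sqrt{q}$, where $q=\|(\ketbra{\psi}{\psi}\tensor I)\ket{w}\|^{2}$ is the weight the query register places on $\ket{\psi}$, the standard hybrid argument bounds the Euclidean deviation of the two final states by $2\sum_{t=1}^{T}\sqrt{q_{t}}\le 2\sqrt{T\sum_{t=1}^{T}q_{t}}$, the last step by Cauchy--Schwarz. The key point is that the oracle-free run applies only fixed unitaries to $\ket{\psi}^{\tensor k}$, so the state just before each query is $W_{t}\ket{\psi}^{\tensor k}\ket{0}$ and $q_{t}$ depends on $\ket{\psi}$ only through these $k+1$ tensor factors.

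Second, I would show the oracle is essential by a no-cloning estimate: the oracle-free run is a fixed channel taking $k$ copies to $k+1$ copies, and the optimal global fidelity of any such cloner with $\ket{\psi}^{\tensor k+1}$ equals the ratio of symmetric-subspace dimensions $\binom{k+d-1}{k}/\binom{k+d}{k+1}=\tfrac{k+1}{k+d}$, which is negligible for $d=2^{n}$. Converting fidelities to the Euclidean metric, reaching overlap $p$ with the oracle therefore forces $2\sqrt{T\sum_{t}q_{t}}=\Omega(\sqrt{p})$. It then remains to bound $\mathbf{E}_{\psi}\sum_{t}q_{t}$. Using the moment identity $\mathbf{E}_{\psi}(\ketbra{\psi}{\psi})^{\tensor m}=\Pi_{\mathrm{sym}}^{(m)}/\binom{d+m-1}{m}$ with $m=k+1$ to average each $q_{t}=\bra{0}\bra{\psi}^{\tensor k}W_{t}^{\dagger}(\ketbra{\psi}{\psi}\tensor I)W_{t}\ket{\psi}^{\tensor k}\ket{0}$, each query contributes only $O(\poly(k)/2^{n})$, so $\mathbf{E}_{\psi}\sum_{t}q_{t}=O(T\,\poly(k)/2^{n})$. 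Combining the two bounds gives $p=O(T^{2}\poly(k)/2^{n})$, i.e.\ $T=\Omega(\sqrt{2^{n}p}/\poly(k))$.

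The main obstacle is the technically new feature relative to the pure search lower bound of \cite{BBBV97}: the reference (oracle-free) run is not independent of $\ket{\psi}$ but still carries the $k$ input copies, so the averaging in the last step couples the query register to the input copies and must be carried out inside the symmetric subspace on $k+1$ factors. Pinning down the exact dependence on $k$---in particular getting the denominator $k\log k$ and the additive $-k$ of the stated bound rather than a cruder $\poly(k)$---is the delicate part; I expect the $\log k$ to emerge from the refined combinatorics of the symmetric-subspace projector (equivalently, from the amplitude-amplification structure by which $k$ copies partially substitute for oracle calls), and the $-k$ to be the lower-order correction accounting for the $O((k+1)/2^{n})$ cloning fidelity already available with no queries. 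A secondary point is the precise form of the hybrid lemma: one must ensure the per-query weights $q_{t}$ may legitimately be evaluated on the oracle-free evolution rather than on the real run, where accumulated reflections about $\ket{\psi}$ would make the average intractable, and this is where I would be most careful.
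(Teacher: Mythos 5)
The paper offers no proof of this statement to compare against---it is imported verbatim as Theorem~5 of \cite{Aar05}---so your proposal must stand on its own, and it does not: its central averaging step is false. You claim that, after Haar-averaging, each query magnitude satisfies $\mathbf{E}_{\psi}[q_{t}] = O(\poly(k)/2^{n})$. But the adversary holds $k$ copies of $\ket{\psi}$: taking $W_{t}$ to be the permutation that routes one input copy into the query register gives $q_{t}=1$ identically in $\psi$, hence $\mathbf{E}_{\psi}[q_{t}]=1$. With $\sum_{t}q_{t}$ as large as $T$, your hybrid bound $2\sqrt{T\sum_{t}q_{t}}\le 2T$ yields only the trivial conclusion $T=\Omega(\sqrt{p})$, not $\Omega(\sqrt{2^{n}p}/\poly(k))$. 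This is not a fixable detail but the entire difficulty of the theorem; it is exactly the obstruction the paper itself flags when it says one cannot ``directly apply the $\Omega(\sqrt{N})$ lower bound on quantum search \cite{BBBV97} because the forger has not only an oracle $U_{\psi}$ recognizing the desired state but also some copies of the state itself.'' Note, moreover, that a query fed a pure copy of $\ket{\psi}$ acquires only a global phase and is physically useless to the adversary, yet your BBBV-style accounting charges it the maximal perturbation $2$; a correct proof must charge only the part of each query that the $k$ copies cannot themselves supply, and it is from that refined accounting that the $k\log k$ denominator and the additive $-k$ arise---precisely the parts you leave as speculation (``I expect the $\log k$ to emerge\ldots'').

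A secondary but real error is that your opening reduction is stated backwards. A lower bound for the Haar-averaged figure of merit does not ``transfer to every fixed state'': for any fixed, known $\ket{\psi}$ the statement is vacuous, since an algorithm with $\ket{\psi}$ hardcoded prepares $\ket{\psi}^{\tensor k+1}$ with zero queries. The theorem is meaningful only when the algorithm is quantified first and $\ket{\psi}$ is worst-case (equivalently, random), and averaging gives the worst-case bound, not a per-state one. The ingredients you do have right---the hybrid inequality with query weights evaluated on the oracle-free run, the Werner-type bound $\binom{k+d-1}{k}/\binom{k+d}{k+1}=(k+1)/(k+d)$ on $k\to k+1$ cloning fidelity, and the $(k+1)$-st moment identity---are standard and correctly stated, but without a sound replacement for the averaging step they do not combine into a proof of the claimed query bound.
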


This allows us to show that a quantum coin scheme is unforgeable in the black-box oracle model.

\begin{theorem}\label{thm:qu:qm:bbcoins:forgery}
Let $(V, \ket{\psi})$ be a quantum coin scheme, where $V$ is as in Figure~\ref{fig:bbcoins:verif-circ-example} with $U_{\psi}$ given as a black-box oracle, and $\ket{\psi}$ is an $n$-qubit pure state.  If not more than $\mathrm{poly}(n)$ coins are issued, then $(V, \ket{\psi})$ is black-box unforgeable.
\end{theorem}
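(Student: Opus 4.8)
The plan is to derive the theorem directly from Aaronson's query lower bound (Theorem~\ref{thm:qu:qm:bbcoins:forgery:aaronson}), arguing by contraposition against Definition~\ref{defn:qu:qm:bbcoins:unforgeable}. First I would fix the setup dictated by the hypothesis ``not more than $\poly(n)$ coins are issued'': the forger is handed $k \in \poly(n)$ copies of $\ket{\psi}$ together with black-box access to $U_\psi$. Then I would assume, toward a contradiction, that the scheme is \emph{not} black-box unforgeable, i.e.\ that some adversary making only $\poly(n)$ queries to $U_\psi$ produces a state $\rho$ with $p := \bra{\psi}^{\otimes k+1} \rho \ket{\psi}^{\otimes k+1}$ non-negligible. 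Unwinding ``non-negligible'' means there is a polynomial $q$ with $p \ge 1/q(n)$ for infinitely many $n$.

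Next I would feed these parameters into Theorem~\ref{thm:qu:qm:bbcoins:forgery:aaronson}, which asserts that producing such a $\rho$ requires
\[
\Omega\!\left( \frac{\sqrt{2^{n} p}}{k \log k} - k \right)
\]
queries to $U_\psi$. The crux is a purely asymptotic observation: since $p \ge 1/q(n)$ forces $\sqrt{2^{n} p} \ge 2^{n/2}/\sqrt{q(n)}$, while $k$ and $k\log k$ are both bounded by polynomials in $n$, the first term is $2^{n/2}/\poly(n) = 2^{\Omega(n)}$, and the subtracted polynomial term $k$ is utterly dominated by it. Hence the required number of queries is exponential in $n$, contradicting the standing assumption that the adversary uses only $\poly(n)$ queries. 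This contradiction establishes black-box unforgeability.

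Finally I would dispose of the small-$k$ boundary cases where Aaronson's formula degenerates (namely $k=0$ and $k=1$, for which $k\log k = 0$). In these cases the forger holds at most one copy of $\ket{\psi}$, so producing an additional register overlapping $\ket{\psi}$ from the oracle $U_\psi$ alone is essentially unstructured search, and the $\Omega(\sqrt{2^{n} p})$ lower bound of Bennett \emph{et al.}~\cite{BBBV97} applies directly to force exponentially many queries; for every $k \ge 2$ the displayed bound is well-defined and the computation above goes through verbatim. The only real obstacle is the routine bookkeeping around ``non-negligible'': one must confirm that each polynomial factor entering the bound---$k$, $k\log k$, and $1/p$---is genuinely swamped by the exponential $2^{n/2}$, so that in particular the additive $-k$ correction cannot cancel the exponential growth. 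Since it plainly cannot, the argument is a short corollary of Theorem~\ref{thm:qu:qm:bbcoins:forgery:aaronson} rather than a substantial new proof.
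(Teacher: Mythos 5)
Your proposal is correct and takes essentially the same route as the paper: assume a successful polynomial-query forger exists, plug the polynomial bounds on $k$ and $1/p$ into Theorem~\ref{thm:qu:qm:bbcoins:forgery:aaronson}, and derive the contradiction $\poly(n) = \Omega\left(\sqrt{2^{n}}/\poly(n)\right)$. Your additional handling of the degenerate cases $k \in \{0,1\}$ (where $k \log k$ vanishes) via the direct search lower bound of \cite{BBBV97} is a careful touch that the paper's proof silently glosses over, but it does not constitute a different approach.
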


\begin{proof}
Suppose otherwise.  Then there exists an adversary who, upon receiving $k$ copies of $\ket{\psi}$ and using $q = \poly(n)$ queries to $U_{\psi}$, can produce a state $\rho$ such that $\bra{\psi}^{\tensor k+1} \rho \ket{\psi}^{\tensor k+1} = p \in 1/\poly(n)$.  By Theorem~\ref{thm:qu:qm:bbcoins:forgery:aaronson}, we need
\begin{equation}
q = \Omega \left( \frac{\sqrt{2^{n}p}}{k \log k} - k \right) 
= \Omega \left( \frac{\sqrt{2^{n} /\poly(n)}}{\poly(n) \log \poly(n)} - \poly(n) \right) 
= \Omega \left( \frac{\sqrt{2^{n}}}{\poly(n)} \right) 
\end{equation}
queries to $U_{\psi}$.  But since the adversary is allowed only a polynomial number $q$ of queries to $U_{\psi}$, we have that $q \in \poly(n)$ and hence $\poly(n) = \Omega\left(\frac{\sqrt{2^{n}}}{\poly(n)} \right)$, which is a contradiction.  Thus the quantum coin scheme must be black-box unforgeable.
\end{proof}

\section{Quantum coins using blind quantum computation}\label{sec:qu:qm:blindcoins}

Blind quantum computation allows one party, Alice, to have another party, Bob, perform computations on her behalf without Bob learning any information about the input state, output state, or the operation performed.  

Blind quantum computation was first introduced by Childs \cite{Chi05} under the name ``secure assisted quantum communication''.  The basic idea is that Alice, who has limited quantum computational abilities (quantum communication, quantum storage, and controlled-$X$ and controlled-$Z$ gates) can have Bob securely perform arbitrary quantum computation, with quantum input and quantum output.  In Childs' protocol, Alice and Bob must perform large amounts of quantum communication, though this could be replaced by quantum teleportation (shared entanglement with Bell measurements and classical communication).

Broadbent, Fitzsimons, and Kashefi \cite{BFK09} present a protocol for blind quantum computation with quantum input and output using measurement-based quantum computation that needs only two rounds of quantum communication: one at the beginning and one at the end.  

Blind quantum computation could be used as follows for verifying quantum coins as follows.  The merchant, playing the role of Bob, implements the verification circuit blindly for the bank, playing the role of Alice.  The merchant receives the coin as the input to the circuit, and interacts with the bank who helps it implement the circuit.   In the \cite{BFK09} scheme, this requires mostly classical interaction, with round round of quantum interaction at the end for the final output correction.  In the end, the output state along with the accept/reject information is with the merchant.

Although the quantum communication requirements for verifying quantum coins using blind quantum computation are no better than simply teleporting the coin to the bank for verification, the quantum computation requirement for the bank is markedly reduced: instead of having to implement the full quantum circuit for coin verification for the thousands of coins being verified each second, it only has to perform step 5 of Protocol 3 of \cite{BFK09}, which consists of at most one $X$ gate and one $Z$ gate per coin qubit.

Obviously, it would be preferable to reduce this quantum communication requirement even further, for example by only requiring quantum communication at the beginning of the protocol and only classical communication for the remainder of the protocol, and without using shared entanglement for teleportation.  A protocol for doing so would be an interactive protocol for quantum circuit obfuscation, and quantum obfuscation is a long standing open problem (cf. \cite{Aar05b}).

\subsection*{Acknowledgements}

The authors gratefully acknowledge helpful discussions with Scott Aaronson, Anne Broadbent, Joseph Fitzsimons, Miklos Santha, and John Watrous.  M.M. was supported by Canada's NSERC, QuantumWorks, MITACS, CIFAR, CRC, ORF, the Government of Canada, and Ontario-MRI.  D.S. was supported by a Canada NSERC Postgraduate Scholarship and Sun Microsystems Laboratories.  Research performed while D.S. was at the University of Waterloo.

\small 

\bibliographystyle{halphads}
\bibliography{/Users/dstebila/Bibliography/Library}

\end{document}